\author{\addtocounter{footnote}{1}  
Jean-Guillaume Dumas\footnote{
Universit\'e de Grenoble. 
Laboratoire LJK,
umr CNRS, INRIA, UJF, UPMF, GINP.
51, av. des Math\'ematiques, F38041 Grenoble, France.
\href{mailto:Jean-Guillaume.Dumas@imag.fr}{Jean-Guillaume.Dumas@imag.fr},
\href{http://ljk.imag.fr/membres/Jean-Guillaume.Dumas/}{ljk.imag.fr/membres/Jean-Guillaume.Dumas/}
}
\and  
Erich Kaltofen\footnote{
Department of Mathematics.
North Carolina State University.
Raleigh, NC 27695-8205, USA.
\href{mailto:kaltofen@math.ncsu.edu}{kaltofen@math.ncsu.edu},
\href{http://www.kaltofen.us}{www.kaltofen.us}
}
}
\newcommand{\customlinebreak}{}
\newcommand{\strechparskip}[1]{}
\newcommand{\strechparsep}[1]{}
\newcommand{\customvspace}[1]{}
\newcommand{\category}[3]{}
\newcommand{\terms}[1]{}
\newcommand{\keywords}[1]{}
\newcommand{\balancecolumns}{}
\newcommand{\makeconference}{}
\let\oldresizebox\resizebox
\renewcommand{\resizebox}[3]{\oldresizebox{\linewidth}{!}{#3}}
\newtheorem{theorem}{Theorem}
\newtheorem{corollary}{Corollary}
\newtheorem{lemma}{Lemma}
\newtheorem{remark}{Remark}
\newcommand{\F}{\ensuremath{\mathbb F}}
\newcommand{\B}{\ensuremath{\mathbb B}}
\newcommand{\Z}{\ensuremath{\mathbb Z}}\newcommand{\ZZ}{\Z}
\newcommand{\Zp}{\leavevmode\kern.1em\raise.0ex
  \hbox{\ensuremath{\mathbb{Z}}}\kern-.1em
  /\kern-.15em\lower.3ex\hbox{p\mbox{\ensuremath{\mathbb{Z}}}}\xspace}
\newcommand{\R}{\ensuremath{\mathbb R}}
\newcommand{\W}{\ensuremath{\mathbb W}}
\newcommand{\bigO}[1]{\ensuremath{O(#1)}\xspace}  
\newcommand{\card}[1]{\ensuremath{\left|#1\right|}\xspace}
\newcommand{\sdp}{semidefinite program\xspace}
\newcommand{\psd}{positive semidefinite\xspace}
\newcommand{\psdness}{positive semidefiniteness\xspace}
\newcommand{\symm}[2]{\mathbb{S}#1^{#2\times #2}}
\newcommand{\sos}{sum-of-squares\xspace}
\newcommand{\lognormW}{\log\| W\mspace{1mu} \|}
\newcommand{\lognormA}{\log\| A\mspace{1mu} \|}
\title{Essentially Optimal Interactive Certificates in Linear Algebra\texorpdfstring{\raisebox{0ex}{*}}{}} 
\begin{document}
\makeconference
\maketitle
\def\thefootnote{\fnsymbol{footnote}}  
\footnotetext[1]{ 
\scriptsize
This research was supported in part by    
the Agence Nationale pour la Recherche    
under Grant ANR-11-BS02-013 HPAC (Dumas)  
and                                       
the National Science Foundation
under Grant CCF-1115772 (Kaltofen).
}

\begin{abstract}

Certificates to a linear algebra computation are
additional data structures for each output, which can be used
by a---possibly randomized---verification algorithm that proves the
correctness of each output.  The certificates are essentially optimal
if the time
(and space)  
complexity of verification is essentially linear
in the input size~$N$, meaning
$N$ times  
a factor $N^{o(1)}$,
i.e.,  
a factor $N^{\eta(N)}$ with $\lim_{N\to \infty} \eta(N)$ $=$ $0$.

We give algorithms that compute essentially optimal certificates for the positive semidefiniteness,
Frobenius form, characteristic and minimal polynomial of an $n\times n$ dense
integer matrix $A$.  Our certificates can be verified in Monte-Carlo bit
complexity $(n^2 \lognormA)^{1+o(1)}$, where $\lognormA$ is the bit size of the
integer entries,
solving  
an open problem in [Kaltofen, Nehring, Saunders, Proc.\ ISSAC 2011] subject
to computational hardness assumptions.

Second, we give algorithms that compute certificates for the rank of sparse or structured
$n\times n$ matrices over an abstract field, whose Monte Carlo verification complexity
is $2$ matrix-times-vector products $+$ $n^{1+o(1)}$ arithmetic operations in the field.
For example, if the $n\times n$ input matrix is sparse with
$n^{1+o(1)}$ non-zero entries,  
our rank certificate can be verified in
$n^{1+o(1)}$ field operations.
 
This extends also to integer matrices with only an extra $\lognormA^{1+o(1)}$
factor.

All our certificates are based on interactive verification protocols with
the interaction removed by a Fiat-Shamir identification heuristic.  The
validity of our verification procedure is subject to
standard computational hardness assumptions from cryptography.  
\par  
\end{abstract}
 
{\small 
  \category{I.1.2 }{Computing Methodologies}{Symbolic and Algebraic Manipulation}
  \terms{Theory, algorithms, verification}
  \keywords{Randomization, probabilistic proof, matrix rank, matrix
    characteristic polynomial, \psdness, output validation, $\sum$-protocols,
    interactive certificate}
}

\section{Introduction}
Suppose you want to externalize your computations to
cloud services. Prior to payment of the services, it would be desirable
to verify that the returned result has been correctly computed by the cloud
servers. This model is economically viable only if the verification
process requires less resources than the computation itself.
It is therefore important to design {\em certificates} that can be
used to verify a result at a lower cost than that of recomputing it.

For instance, a primality certificate is a formal proof that a number
is prime. In \cite{Pratt:1975:primroot}, primality is assessed by
presenting a primitive root and the factorization of $m-1$.
The latter can be checked fast by remultiplying, and then primitivity
is polynomially checkable.

In linear algebra our original motivation is related to \sos.
By Artin's solution to Hilbert 17th Problem, any polynomial inequality
$\forall \xi_1,\ldots,\xi_n\in\R, f(\xi_1,\ldots,\xi_n)\geq
g(\xi_1,\ldots,\xi_n)$ can be proved by a fraction of \sos: 
\begin{equation}\label{eq:sos}
  \exists u_i, v_j \in \R[x_1,\ldots,x_n],
  f-g=\left(\sum_{i=1}^\ell u_i^2\right)/\left(\sum_{j=1}^m v_j^2\right)
\end{equation}
Such proofs can be used to establish global minimality for
 
$g = \inf_{\xi_v\in\R} f(\xi_1,\ldots,\xi_n)$ 
and constitute certificates in non-linear global optimization. 
A symmetric integer matrix $W\in \symm{\ZZ}{n}$ is \psd,
denoted by $W \succeq 0$, if all its eigenvalues, which then must be
real numbers, are non-negative. Then, a certificate for \psdness of rational
matrices constitutes, by its Cholesky factorizability, the final step
in an exact rational \sos proof, namely
\begin{multline}\label{eq:sdp} 
  \hspace*{-0.5em}\exists e\ge 0,\ W^{[1]}\succeq 0,\ W^{[2]} \succeq 0,\ W^{[2]}\neq \mathbf{0}:
  \\
  (f-g)(x_1,\ldots,x_n)\cdot(m_e(x_1,\ldots,x_n)^T W^{[2]}m_e(x_1,\ldots,x_n))=\\
  m_d(x_1,\ldots,x_n)^T W^{[1]}m_d(x_1,\ldots,x_n),
\end{multline}
where the entries in the vectors $m_d,m_e$ are the terms occurring in $u_i,v_j$
in
(\ref{eq:sos}).
In fact, (\ref{eq:sdp}) is the \sdp that one solves.

Thus arose the question how to give possibly probabilistically checkable
certificates for linear algebra problems.
In \cite{Kaltofen:2011:quadcert} the certificates are restricted to those that
are checkable in essentially optimal time, that is, in bit complexity 
$(n^2 \lognormW)^{1+o(1)}$, 
where $\lognormW$ is the bit size of the entries in $W${}.
Quadratic time is feasible because a matrix multiplication $AB$ can be certified
by the resulting product matrix $C$ via Rusin Freivalds's
\cite{Freivalds:1979:certif} (see also \cite{Kimbrel:1993:PAV}) probabilistic
check: check $A(B v) = C v$ for a random vector $v${}.

Note that programs that check their results from \cite{Blum:1995:checkwork}
have the higher matrix-multiplication time complexity. 
In \cite{Kaltofen:2011:quadcert} a certificate for matrix
rank was presented, based on Storjohann's Las Vegas rank algorithm
\cite{Storjohann:2009:IMR}, but matrix \psdness remained open.  
Also the presented certificate for the rank did not take into account a possible
structure in the matrix.

In the following we solve these two problems.
In both cases, \psdness and structured or blackbox matrices, our solution is to
use either {\em interactive} certificates under the random oracle model, or
heuristics under standard computational hardness assumptions from cryptography. 
Removing the cryptographic assumptions remains however a fundamental open
problem. Providing certificates to other problems, such as the determinant or
the minimal and characteristic polynomial of blackbox matrices, is also open.

In Section~\ref{sec:notions},
we detail the different notions of certification that can be used and in
particular the relaxation we make over the certificates
of~\cite{Kaltofen:2011:quadcert}: in the certificates presented here, we allow
the verifier to provide the random bits used by the prover, in an interactive
manner.
We also present in this section the Fiat-Shamir derandomization heuristic that
can turn any interactive certificate into a non-interactive heuristic
certificate.
 
More precisely, the idea is to devise an interactive protocol for the random
oracle model, and then to replace oracle accesses by the computation of an
appropriately chosen function
$h$~\cite{Fiat:1986:Shamir,Bellare:1993:randomoracle}.

Then we first present in Section~\ref{sec:ints} an interactive certificate for
the Frobenius normal form that can be verified in
$\bigO{n^{2+o(1)}(\lognormA)^{1+o(1)}}$ binary operations, as in
\cite{Kaltofen:2011:quadcert}, but our new certificate also occupies an optimal
space of $\bigO{n^{2+o(1)}(\lognormA)^{1+o(1)}}$ bits. 
This is an order of magnitude improvement over
\cite[Theorem~4]{Kaltofen:2011:quadcert}.  
This certificate can then be used as in the latter paper to certify the minimal
and characteristic polynomial as well as \psdness, while keeping the lower
memory requirements. In the same section we also present another, stand-alone,
characteristic polynomial certificate, which can also be used for \psdness, with
slightly smaller random evaluation points.

Finally in Section~\ref{sec:sparserank} we present a new certificate for the
rank of sparse or structured matrices.
 
The certificate combines an interactive certificate of non-singularity,
giving a lower bound to the rank, with an interactive certificate for an upper
bound to the rank.
Overall the interactive certificate for the rank requires only
$2\Omega+n^{1+o(1)}$ arithmetic operations over any coefficient domain,
where $\Omega$ is the number of operations required to perform one
matrix-times-vector product. 
 
This certificate is then extended to work over the integers with only an extra $\lognormA^{1+o(1)}$ factor.
For instance, if the matrix is sparse with only $n^{1+o(1)}$
non-zero elements, then the certificate verification is essentially linear.

\section{Notions of certificate}\label{sec:notions}
The ideas in this paper arise from linear algebra, probabilistic algorithms,
program testing and cryptography. 

We will in particular combine:
\begin{itemize}
\item the notions of
certificates for linear algebra of Kaltofen et al.~\cite{Kaltofen:2011:quadcert},
which extend the randomized algorithms of Freivalds~\cite{Freivalds:1979:certif},
 
and reduce the computation cost of the program checkers of
Blum and Kannan~\cite{Blum:1995:checkwork},

\item with probabilistic interactive proofs of
Babai~\cite{Babai:1985:arthurmerlin} and Goldwasser et
al.~\cite{Goldwasser:1985:IPclass},
\item  as well as Fiat-Shamir
heuristic~\cite{Fiat:1986:Shamir,Bellare:1993:randomoracle} turning interactive
certificates into non-interactive heuristics subject to computational hardness.
\end{itemize}

We first recall some of these notions and then define
in Section~\ref{ssec:intcert} what we mean by perfectly complete, sound and
efficient interactive certificates.

\subsection{Arthur-Merlin interactive proof systems}
 
A proof system usually has two parts, a theorem $T$ and a proof $\Pi$, and the
validity of the proof can be checked by a verifier~$V$.
Now, an {\em interactive proof}, or a
{\em $\sum$-protocol}, is a dialogue between a prover $P$ (or {\em Peggy} in
the following) and a verifier $V$ (or {\em Victor} in the following), where $V$
can ask a series of questions, or challenges, $q_1$, $q_2$, $\ldots$ and $P$ can
respond alternatively with a series of strings $\pi_1$, $\pi_2$, $\ldots$, the
responses,  in order to prove the theorem $T$. 
The theorem is sometimes decomposed into two parts, the hypothesis, or input,
$H$, and the commitment, $C$. Then the verifier can accept or reject the
proof: 
$V (H,C, q_1, \pi_1,q_2,\pi_2,\ldots)\in\{\text{accept}, \text{reject}\}$. 

To be useful, such proof systems should satisfy {\em completeness} (the prover
can convince the verifier that a true statement is indeed true) and soundness
(the prover cannot convince the verifier that a false statement is true). 
More precisely, the protocol is {\em complete} if the probability that a true
statement is rejected by the verifier can be made arbitrarily small.
Similarly, the protocol is {\em sound} if the probability that a false statement
is accepted by the verifier can be made arbitrarily small.
The completeness (resp. soundness) is {\em perfect} if accepted (resp. rejected)
statement are always true (resp. false). 

It turns out that interactive proofs with perfect completeness are as powerful
as interactive proofs \cite{Furer:1989:perfectcomp}.
Thus in the following, as we want to prove correctness of a result more than
proving knowledge of it, we will only use interactive proofs with perfect
completeness. 

On the one hand, if a protocol is both perfectly complete and perfectly sound
then it is deterministic. On the other hand, if at least one of completeness
and soundness is not perfect, then the proof is probabilistic and correspond to
Monte Carlo algorithms (always fast, probably correct).

After submitting our paper to ISSAC, we learned of related work on
certifying the evaluation of a Boolean or arithmetic circuit 
by multi-round interactive protocols
GKR'08 \cite{Goldwasser:2008:delegating} and Thaler'13
\cite{Thaler:2013:crypto}, 
where the verifier runs in  
essentially linear time in the input size
and the prover is limited in compute power.
Since our computational problems in linear algebra
can be solved by a Boolean or arithmetic circuit of size $S(n)$,
where $S(n)$ is the bit or arithmetic computation time, 
the GKR'08 and Thaler'13 protocols constitute an alternate (implicit) approach
to certification of the output.  Those protocols (for multiple rounds) have
smaller communication cost, namely, linear in the depth of the circuit, 
than ours, which can be linear or quadratic in the matrix dimension.
But GKR'08 and Thaler'13 may require a more powerful prover:
their provers may require
a factor of $O(\log(S(n)))$ or a constant factor more time than $S(n)$.
Several of our certificates
are computed faster, in $S(n) + o(S(n))$ time, where $S(n)$ is the sequential time,
e.g., our certificate for \psdness.
A technical condition for the GKR'08 and Thaler'13 verifier
is that its complexity depends linearly on the depth of the circuit, 
so an algorithm of depth $\Omega(n^{2.1})$ would not satisfy our requirement
of linear complexity for the verifier,  
but none of our algorithms have that much depth. 

More significantly, our protocols and certificates are explicit:
the verifier need not know the circuit for the computation,
nor compute certain properties of it.
Nevertheless, GKR'08 \cite{Goldwasser:2008:delegating} and Thaler'13  \cite{Thaler:2013:crypto}
guarantee multi-round protocols,  
and motivate the search for explicit certificates.

In our explicit setting, we {\em require} that the verifier algorithm has
lower computational complexity than any known algorithm computing the property.
 
Moreover, we want  
the complexity of the prover algorithm  
as close as
possible to the best known algorithm computing the property without
certificates.  

\subsection{Certificates in linear algebra}
For Blum and Kannan~\cite{Blum:1995:checkwork} a program checker for a
program~$P$ is itself a program~$C$. For any instance~$I$ on which program~$P$
is run, $C$ is run subsequently. $C$ either certifies that the program~$P$ is
correct on~$I$ or declares~$P$ to be buggy. There, the programs can be rerun on
modified inputs, as in their matrix rank check, and thus might require more time
to check their work than to do the work itself.

On the contrary, in \cite{KLYZ09,Kaltofen:2011:quadcert}, a certificate for a
problem that is given by input/output specifications is an input-dependent data
structure and an algorithm that computes from that input and its certificate the
specified output, and that has lower computational complexity than any known
algorithm that does the same when only receiving the input. Correctness of the
data structure is not assumed but validated by the algorithm. 
With respect to interactive proofs, the input/output is related to the property
to be proven together with the commitment.
However, as no interaction is possible between the prover and the verifier, this
amounts to using a single round protocol where the prover sends only the
commitment and then the verifier accepts it or not. 

In this paper we use a modified version where we allow interactive exchanges
between the prover and the verifier but preserve the requirements on lower total
complexity for the verifier.
Moreover, we then can convert back these two-rounds protocol into one round
protocols via Fiat-Shamir heuristic: hash the input and commitment with an
unpredictable and universal hash function  (such as a cryptographic hash
function), to simulate the random challenges proposed by the verifier. 

It turns out that it seems easier to design certificates that are interactive
than to design directly single round certificates. This could be related to the
power of the interactive proof system complexity class (IP) and the
probabilistically checkable proofs (PCP).

\subsection{Interactive certificates}\label{ssec:intcert}
 
There exists ways to reduce interactive proofs with $k$ rounds to interactive
proofs with perfect completeness and $2$ rounds,
by increasing the verifier's complexity by time exponential in $k$.
Here we will limit ourselves to $2$ rounds
for our definition of interactive certificates. 

More precisely, in the following we use interactive certificates of a given
property, mainly as  {\em two-rounds probabilistic $\sum$-protocols with perfect completeness}: 
\begin{enumerate}
\item The prover of a property sends a commitment to the verifier.
\item The verifier sends back a (randomly sampled) challenge, potentially
  depending on both the property and the commitment.
\item The prover completes the protocol with a response convincing
  the verifier of the property.
\end{enumerate}
In order to become an {\em interactive certificate}, this two round
$\sum$-protocol should then satisfy soundness, perfect completeness and
efficiency as follows: 
\begin{enumerate}\renewcommand {\theenumi}{\roman{enumi}}
\item The protocol is {\em perfectly complete}: a true statement will always be
  accepted by the verifier.
\item The protocol is {\em sound}: the probability that a false statement will
  be accepted by the verifier can be made arbitrarily small.
\item The protocol is {\em efficient}: the verifier has lower computational
  complexity than any known algorithm that computes the true statement when only receiving the input.
\end{enumerate}

The interactive certificate can also be said to be {\em essentially optimal}
when the verifier needs only time and space complexity of the same order of
magnitude as the size of the input and output to verify the latter.

With this relaxed model, we are able in the following to improve on
some space complexities for integer linear algebra problems and also
on time complexities for some problems over generic domains, like the rank of
blackbox matrices.

\subsection{Fiat-Shamir derandomization into a single
  heuristic round}\label{ssec:fiatshamir}
In a practical perspective (say when using a compiled library, rather than an
interpreter; or when posting the certificate in question) it is not always
possible for the verifier (a user wanting a result) to interact with the prover
(the program). 

Then, there is always the possibility to transform an interactive certificate
into a non-interactive heuristic. 
Here we use the strong Fiat-Shamir
heuristic~\cite{Fiat:1986:Shamir,Bellare:1993:randomoracle,Bernhard:2012:fiatshamir},
where the random challenge message of the verifier is replaced by a
cryptographic hash of the property {\em and} the commitment message.
In practice, the cryptographic hash can be used as a seed for a pseudo-randomly
generated sequence that the prover can generate a priori. 
For an a posteriori verification, the verifier decides whether to accept or not
the certificate, as in two rounds interactive protocols, but has also to check
that the challenge used by the prover has really been generated using the input
and commitment as seeds.

In this setting, breaking the protocol is somewhat equivalent to breaking the
cryptographic hash function: finding a combination of input and false commitment
that will be accepted by the verifier relates to knowing in advance some parts
of the output of the hash function.
See for instance Section~\ref{ssec:bbs} where
breaking the protocol is equivalent to predicting the value of some bits in a
hash, and that can for instance be used to factor integers if Blumb-Blum-Shub
hash function is used. 

Note that it is important to use the {\em strong} heuristic that uses a
combination of both the input and the commitment for the hashing.
See for instance Section~\ref{ssec:charpoly} where we need the result itself to
be part of the seed in order to obtain a correct certificate.

\section{Reducing space with respect to non-interactive certificates{\customlinebreak} over the
  integers}\label{sec:ints}
\subsection{Interactive certificate for residue systems}
In \cite[Theorem~5]{Kaltofen:2011:quadcert}, the given certificates for the rank
and determinant of an integer matrix are essentially optimal whereas the
certificates for the Frobenius normal form (without transformation matrices),
the characteristic and minimal polynomial and \psdness are not: they require
residue systems that occupies cubic bit space whereas the input and results
occupy only a quadratic number of bits.

Those residue systems allow the verifier to check an integer matrix
factorization 
($A=LU$ for Gaussian elimination or $A=SFS^{-1}$ for the Frobenius form) where
the resulting factors are in general of cubic size (quadratic number of entries
but each one with linear magnitude) via Freivalds' certificate.
The trick is to store these factorizations modulo many distinct primes.
Then if the integer matrix factorization is not correct it means that $A-LU$
(resp. $A-SFS^{-1}$) is non zero. 
Therefore, from a bound on the maximal possible size of this difference (roughly
cubic), it cannot be zero modulo a large number of primes. 
Consequently, if the set
of distinct primes is larger than the bound, selecting a random prime $p$ in the
set and checking whether $A-LU$ (resp. $A-SFS^{-1}$) is zero modulo $p$ would
reveal the false statement with high probability.

Our idea here is to use several rounds interactive certificates: instead of
storing the factorizations modulo many distinct primes, just compute them on
demand of the verifier. The verifier has just to select random primes and the
prover will respond with the factorization modulo these primes.

\begin{theorem}\label{thm:RNS}
  Let $A\in \Z^{n\times n}$. There exists an {\em interactive certificate} for the
  Frobenius normal form, the characteristic or minimal polynomial of $A$. The
  interactive certificate can be verified in $n^{2+o(1)}(\lognormA)^{1+o(1)}$
 
  bit operations
  and occupies $n^{2+o(1)}(\lognormA)^{1+o(1)}$ bit space.
\end{theorem}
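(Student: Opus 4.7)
The plan is to transform the residue-system construction underlying \cite[Theorem~5]{Kaltofen:2011:quadcert} into a two-round $\sum$-protocol, transmitting the bulky similarity data only after Victor has picked a single random modulus. Because the characteristic polynomial $\chi_A$ and the minimal polynomial $\mu_A$ are both read off the Frobenius form $F$ in linear time, it suffices to handle the Frobenius case.

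Protocol. Peggy commits by sending the claimed Frobenius normal form $F\in\Z^{n\times n}$: its invariant-factor coefficients have bit length $\bigO{n\lognormA}$ by Hadamard/Mignotte bounds on $\det(xI-A)$, so $F$ fits in $n^{2+o(1)}(\lognormA)^{1+o(1)}$ bits, and Victor checks its block-companion shape in linear time. Victor then samples a prime $p$ uniformly from a pool $\mathcal{P}$ of primes of bit length $\Theta(\log(n\lognormA))$, large enough to dominate the bad-prime bound described below. Peggy replies with $S_p,T_p\in\F_p^{n\times n}$, the reductions mod $p$ of a rational conjugator $S$ satisfying $AS=SF$ and of its inverse; each occupies $\bigO{n^2\log p}$ bits. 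Choosing a random $v\in\F_p^n$, Victor runs two Freivalds tests modulo $p$, each in $\bigO{n^2}$ field operations: $S_pT_pv\equiv v$, forcing $T_p=S_p^{-1}$, and $(A\bmod p)\,S_pv\equiv S_pFv$, certifying $A\sim F$ over $\F_p$. The total verifier time and space stay within $n^{2+o(1)}(\lognormA)^{1+o(1)}$.

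Completeness is immediate: for any $p$ coprime to the denominator of the rational $S$, the honest $(S_p,T_p)$ pass the two checks deterministically, and only polynomially many such primes are excluded. The main obstacle, as expected, is soundness. Suppose the committed $F$ is not similar to $A$ over $\mathbb{Q}$. I would argue that for all but $(n\lognormA)^{\bigO{1}}$ primes $p$ the reductions $A\bmod p$ and $F\bmod p$ remain non-similar over $\F_p$, so no intertwining pair $(S_p,T_p)$ can exist; whatever Peggy sends then yields a nonzero residue matrix $AS_p-S_pF$, and the Freivalds check fails on at least a $1-1/p$ fraction of challenges $v$. The required bad-prime bound comes from Hadamard/Cauchy estimates on the nonzero pivots of a fraction-free Smith-form reduction of $xI-A$ and $xI-F$ (equivalently, on the resultants and leading coefficients of the distinct invariant factors): these integers have bit length polynomial in $n$ and $\lognormA$, so only polynomially many primes can divide them and alter the invariant-factor structure modulo $p$. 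Taking $|\mathcal{P}|$ to dominate this count, and amplifying with $\bigO{1}$ independent repetitions, drives the soundness error below any prescribed $2^{-s}$ while preserving the stated complexity bounds. The $\chi_A$ and $\mu_A$ certificates follow as immediate corollaries.
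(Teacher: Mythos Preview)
Your proposal is correct and follows essentially the same route as the paper: convert the residue-system certificate of \cite[Theorem~4]{Kaltofen:2011:quadcert} into a two-round protocol by having Victor choose the prime $p$ before Peggy transmits the single modular similarity data, so that only one $(S_p,F_p,T_p)$ triple (rather than the full cubic-size residue system) ever needs to be stored. The paper's proof is a one-sentence deferral to that reference, whereas you spell out the bad-prime soundness count explicitly via fraction-free Smith-form pivots of $xI-A$ and $xI-F$, and you reformulate the Freivalds check as $AS_pv\equiv S_pFv$ instead of $S_pF_pT_pv\equiv Av$; both variants are equivalent once $S_pT_p\equiv I$ is verified.
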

\begin{proof}
  Use the same algorithm as in~\cite[Theorem 4]{Kaltofen:2011:quadcert} but
  replacing the random choice by the verifier of a given tuple $(p,S_p,F_p,T_p)$
  (where $T_p\equiv S_p^{-1} \mod p$) by the choice of a random prime $p$ by the
  verifier  and a response of a corresponding $(S_p,F_p,T_p)$ modulo $p$ by the
  prover. 
\end{proof}

\begin{corollary} There exists a {\em non-interactive heuristic certificate} for
  the Frobenius normal form, the characteristic or minimal polynomial that
  occupy the same space and can be verified in the same time.
\end{corollary}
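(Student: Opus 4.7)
The plan is to obtain the non-interactive certificate by applying the strong Fiat-Shamir heuristic of Section~\ref{ssec:fiatshamir} to the interactive certificate provided by Theorem~\ref{thm:RNS}. Concretely, the interactive protocol proceeds by having the verifier draw one or more random primes $p$ and the prover reply with a residue-system witness $(S_p,F_p,T_p)$ for the claimed factorization $A=SFS^{-1}$ (or its Frobenius / minimal / characteristic polynomial counterpart) modulo $p$. Since the only verifier-originated randomness is the choice of the primes, these are precisely the coins to be derandomized.

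First I would fix the output of the computation (Frobenius form $F$, or minimal/characteristic polynomial) as the commitment $C$ and the matrix $A$ as the input $H$. Then I would replace the verifier's random sampling of primes by $p := \Pi(h(H,C))$, where $h$ is a cryptographic hash function and $\Pi$ is a fixed, deterministic map from bit strings to primes in the appropriate range (for instance, interpret the hash value as a candidate integer and walk up to the next prime, which is standard and has negligible overhead in our bit bound). The prover now computes the witnesses $(S_p,F_p,T_p)$ modulo these deterministically generated primes and publishes them as the single-round certificate. The verifier accepts iff (a) it recomputes $h(H,C)$ and the derived primes and confirms the prover used them, and (b) the residue checks of Theorem~\ref{thm:RNS} pass.

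For the complexity accounting, the certificate sent consists of the same residue data $(S_p,F_p,T_p)$ as in the interactive version, occupying $n^{2+o(1)}(\lognormA)^{1+o(1)}$ bits. The verifier's work is that of Theorem~\ref{thm:RNS} plus one evaluation of $h$ on $O(n^2 \lognormA)$ bits of input and the prime-derivation step; both contribute only $n^{2+o(1)}(\lognormA)^{1+o(1)}$ bit operations, so essentially-optimal time and space are preserved.

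The one subtlety, and the main conceptual obstacle, is soundness: the Fiat-Shamir transform is only heuristic, so the proof is not unconditional. I would appeal directly to the random-oracle argument of \cite{Fiat:1986:Shamir,Bellare:1993:randomoracle,Bernhard:2012:fiatshamir} and to the discussion in Section~\ref{ssec:fiatshamir}: a cheating prover able to pass the single-round check with non-negligible probability would have to predict, in advance of choosing $C$, the primes that $h(H,C)$ will yield, contradicting the standard one-way/unpredictability assumption on $h$. It is precisely for this reason that we must use the \emph{strong} variant, hashing both $H$ and $C$; hashing only $H$ would let Peggy adaptively pick a bad $C$ after seeing the primes. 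No other step of the proof changes relative to Theorem~\ref{thm:RNS}.
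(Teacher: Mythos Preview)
Your proposal is correct and follows essentially the same approach as the paper: apply the strong Fiat-Shamir heuristic to the interactive protocol of Theorem~\ref{thm:RNS}, deriving the prime(s) deterministically from a hash of both the input matrix and the committed output, then publish the same residue witnesses $(S_p,F_p,T_p)$ for the verifier to re-derive the primes and run the modular checks. The only cosmetic difference is that the paper feeds the hash into a pseudo-random prime generator whereas you use a next-prime map $\Pi$; both are standard instantiations of the same idea, and your added remarks on complexity and on why the \emph{strong} variant is required are in fact more explicit than the paper's own argument.
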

\begin{proof} We use Fiat-Shamir. The prover:
  \begin{enumerate}
  \item
  computes the integer Frobenius normal form $F$ (or the characteristic or
  minimal polynomial) over the integers;
  \item 
  then he chooses a cryptographic hash function and a pseudo-random prime 
  generator; 
  \item 
  he computes the hash of the input matrix together with the result;
  \item 
  this hash is used as a seed for the pseudo-random prime generator to
  generate one (or a constant number of) prime number(s);
  \item 
  the prover finally produces the Frobenius normal form and the change of basis
  modulo that prime(s).
  \end{enumerate}
  The certificate is then composed of the input, the output, the hash function,
  the pseudo-random prime generator, the generated prime numbers and the
  associated triples $(S_p,F_p,T_p)$.

  The verifier then:
  \begin{enumerate}
  \item checks that the hash function and the pseudo-random prime generator
    are well-known, cryptographically secure, functions;
  \item checks that he can recover the primes via hashing the combination of the
    input and the output;
  \item and verifies the zero equivalence modulo $p$ of $(F-F_p)\mod p$,
    $(S_pT_p-I)\mod p$ and $(S_p F_p T_p-A)\mod p$.
  \end{enumerate}
\end{proof}

\subsection{Direct interactive certificate for the
  characteristic polynomial and positive definiteness of integer
  matrices}\label{ssec:charpoly}
In~\cite{Kaltofen:2011:quadcert}, the certificate for characteristic
polynomial occupies roughly $n^{3+o(1)}$ bit space as it requires the
Frobenius matrix normal form with a similarity residue system with primes
bounded by $\bigO{ n(\log(n)+\lognormA) }$.

As shown in Theorem~\ref{thm:RNS}, with an interactive certificate and a random
oracle for the choice of prime numbers of the latter size, this yields an
interactive certificate with only $n^{2+o(1)}$ bit space requirements.
 
We propose in the following Figure~\ref{fig:charp} a simpler
certificate, still relying on the determinant certificate, but with
evaluation points bounded only by $\bigO{n}$. This gives a similar but smaller
$o(1)$ factor in the complexity.

\begin{figure*}[htbp]\center 
  \noindent\resizebox{.8\linewidth}{!}{$$
    \xy
    \xymatrix@R=7pt@C=1pc@W=15pt{
      &Peggy &&&& Victor\\
      Input&&& A\in\Z^{n\times n} &&\\
      \ar@{.}[rrrrrr]&&&&&&\\
      Commitment&g\in\Z[X]=charpoly_A&\ar[rr]^*[*1.]{1:\ g(X)} &&& degree(g)\stackrel{?}{=}n\\
      Challenge& & && \ar[ll]_*[*1.]{2:\ \lambda}& \lambda\in\Z\\
      Response& \delta\in\Z=det(\lambda I-A)&\ar[rr]^*[*1.]{3:\ \delta}&&& \delta\stackrel{?}{=}g(\lambda)\\
      & C: Cert(\delta=det(\lambda I-A))&\ar[rr]^*[*1.]{4:\ C}&&& \delta\stackrel{?}{=}det(\lambda I-A)\\
    }\POS*\frm{-}
    \endxy
    $$
  }
  \caption{Interactive certificate for the characteristic polynomial}\label{fig:charp}\customvspace{5pt}
\end{figure*}
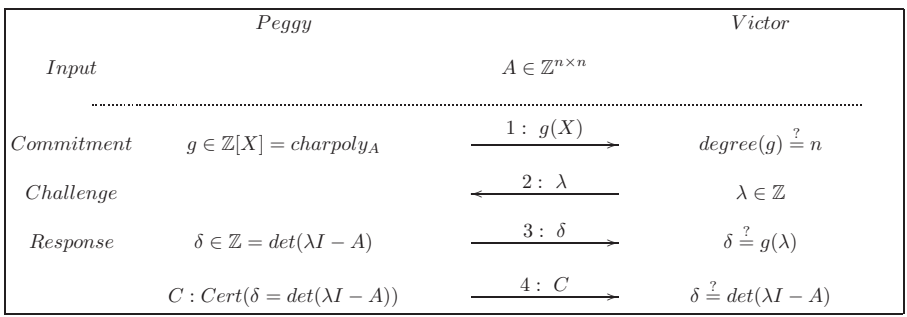

\begin{theorem}\label{th:charp}
  For $A\in\Z^{n\times n}$, 
  the interactive certificate of Figure~\ref{fig:charp} for
  the characteristic polynomial is sound, perfectly complete and the number of
  operations performed by the verifier, as well as the bit space
  required to store this certificate, is bounded by
  $n^{2+o(1)}(\lognormA)^{1+o(1)}$.
\end{theorem}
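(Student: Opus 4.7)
The plan is to check the three requirements (perfect completeness, soundness, efficiency) separately, treating the interactive determinant certificate from \cite{Kaltofen:2011:quadcert} as a black box at step~4.

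For \emph{perfect completeness}, the straightforward observation is that if Peggy sends the true characteristic polynomial $g = \det(XI-A)$, then $\deg(g)=n$, the identity $g(\lambda) = \det(\lambda I - A)$ holds for \emph{every} $\lambda\in\Z$, and the determinant certificate $C$ succeeds by completeness of the underlying determinant protocol.

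For \emph{soundness}, I would exploit monicity. Both the true characteristic polynomial $\chi_A := \det(XI - A)$ and any $g\in\Z[X]$ that survives the degree check are monic of degree $n$, so $g-\chi_A$ has degree at most $n-1$. If $g\neq \chi_A$, then for $\lambda$ drawn uniformly from a set $S\subset \Z$ of size $|S|=\Theta(n^c)$ (some fixed $c\geq 2$, so that $\lambda$ still has only $O(\log n)$ bits), the probability that $g(\lambda)=\chi_A(\lambda)$ is at most $(n-1)/|S|$, which is at most an arbitrarily small constant and can be amplified by repetition. In the remaining case Peggy must commit at step~3 to a $\delta$ with $\delta=g(\lambda)$; if $\delta\neq \det(\lambda I - A)$, either Victor's check at step~3 fails immediately, or the essentially optimal interactive determinant certificate of \cite{Kaltofen:2011:quadcert} applied to the integer matrix $\lambda I - A$ catches the cheat with bounded error at step~4. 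A union bound over the two error sources proves soundness.

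For \emph{efficiency}, I would account space and time separately. The commitment $g$ has $n+1$ coefficients each of bit size $O(n\log\|A\|)$ by Hadamard-type bounds, giving $O(n^2\log\|A\|)$ bits; the challenge $\lambda$ is $O(\log n)$ bits; $\delta = g(\lambda)$ has bit size $O(n(\log n + \log\|A\|))$; and the determinant certificate $C$ for the integer matrix $\lambda I - A$ is of size $n^{2+o(1)}(\log\|A\|)^{1+o(1)}$ by \cite{Kaltofen:2011:quadcert} (note $\|\lambda I - A\| \le \|A\| + |\lambda| = \|A\|\cdot n^{O(1)}$, so $\log$ of the entry size is still $(\log\|A\|)^{1+o(1)}$). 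Victor's work consists of a degree check, one Horner evaluation $g(\lambda)$ (whose $n$ multiplications involve an accumulator of size at most $O(n(\log n + \log\|A\|))$ against a multiplier $\lambda$ of size $O(\log n)$, summing to $n^{2+o(1)}(\log\|A\|)^{1+o(1)}$ bit operations using fast integer arithmetic), plus verification of $C$ within the same bound.

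The main obstacle I expect is the book-keeping in the efficiency analysis: one must choose $|S|$ large enough for soundness yet small enough that $\lambda$ contributes only $O(\log n)$ bits, and then confirm that both (a) the Horner evaluation and (b) the determinant certificate for $\lambda I - A$ inherit the desired $n^{2+o(1)}(\log\|A\|)^{1+o(1)}$ bound despite the extra $|\lambda|=n^{O(1)}$ factor. Everything else is either by construction of the protocol or cited from \cite{Kaltofen:2011:quadcert}.
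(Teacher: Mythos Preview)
Your proposal follows essentially the same approach as the paper: perfect completeness from $g=\det(XI-A)$, soundness from a root-counting argument on $g-\chi_A$ combined with the soundness of the determinant certificate of \cite{Kaltofen:2011:quadcert}, and efficiency from Hadamard-type size bounds plus Horner evaluation. One small slip: the protocol's check $\deg(g)\stackrel{?}{=}n$ does \emph{not} enforce monicity, so a dishonest Peggy may send a non-monic $g$ and then $g-\chi_A$ can have degree exactly $n$, not $n-1$; the paper accordingly bounds the number of roots by $n$ and samples $\lambda$ from $\{1,\dots,cn\}$ for a constant $c>2$, which still gives $|\lambda|=O(\log n)$ and keeps the rest of your complexity analysis intact.
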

\begin{proof}
  For the determinant certificate we
  use~\cite[Theorem~5]{Kaltofen:2011:quadcert} whose complexity
  matches that of the present theorem.\\
   
  If Peggy is honest then the definition of the
  characteristic polynomial yields $charpoly_A=det(X I -A)$ and thus
  the protocol is perfectly complete.\\
   
  If Peggy is dishonest then $g-charpoly_A$ being of degree at most
  $n$, it has at most $n$ roots. Thus if Victor samples random
  elements among the first say $c n$ integers, {\em after} the
  commitment $g$, the probability that
  Victor accepts the certificate is less than $1/c$. If the protocol
  is repeated $k$ times with independent draws of $\lambda$, then the
  probability that Victor accepts it $k$ times is lower than
  $\left(\frac{1}{c}\right)^k$ and therefore the protocol is sound.\\

  For the complexity, one chooses a constant $c>2$ so that $\lambda$
  has $\bigO{\log(n)}$ bits. Thus $\delta$, as the determinant of 
  $\lambda I -A$, is bounded by Hadamard's bound to
  $\bigO{n\log(\| A\mspace{1mu} \|+n)}$ bits.  
  With Horner evaluation and Chinese
  remaindering, the check 
  $g(\lambda)\stackrel{?}{=}\delta$ can thus be performed in less than
  $\bigO{n^2\log(\| A\mspace{1mu} \|+n)}$ operations.  
  This is within the announced bound. 
\end{proof}

\begin{corollary}
  Let A be an $n\times n$ symmetric matrix having
  minors bound $H_A$ of bit length $\log_2(H_A)=n^{1+o(1)}$. 
  The signature of A can be verified by an interactive certificate in 
  $n^{2+o(1)}$ binary operations with a 
  $n^{2+o(1)}$ bit space characteristic polynomial certificate. 
  Thus the same certificate serves for positive or negative definiteness
  or semidefiniteness.
\end{corollary}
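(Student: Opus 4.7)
The plan is to recover the signature of $A$ directly from the certified characteristic polynomial and then read off all four definiteness predicates from it. My first step is to observe that the bit-length hypothesis $\log_2 H_A = n^{1+o(1)}$, together with a Hadamard-type estimate (in which $H_A$ grows essentially like $\|A\|^n$ up to a factor $n^{n/2}$), forces $\lognormA = n^{o(1)}$: otherwise $\log H_A$ would exceed $n^{1+o(1)}$. Moreover, the coefficients of $p(X)=\det(XI-A)$ are signed sums of principal $k\times k$ minors, so each has magnitude at most $\binom{n}{k}H_A$ and bit length $n^{1+o(1)}$; the polynomial $p(X)$ thus occupies $n^{2+o(1)}$ bits in total. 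Theorem~\ref{th:charp} then certifies $p(X)$ in $n^{2+o(1)}(\lognormA)^{1+o(1)} = n^{2+o(1)}$ bit operations within $n^{2+o(1)}$ bit space.

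Once $p(X)$ has been accepted, I extract the signature $(n_+,n_-,n_0)$ of~$A$ from the signs of its coefficients alone. The key ingredient is that a real symmetric matrix has only real eigenvalues, and for any real polynomial with all real roots Descartes' rule of signs holds as an \emph{equality}: the number $n_+$ of positive roots counted with multiplicity equals exactly the number of sign variations in the coefficient sequence of $p(X)$, and symmetrically $n_-$ equals the number of sign variations of $p(-X)$. The multiplicity $n_0$ of the zero eigenvalue is read off as the index of the lowest nonzero coefficient of $p$. A single linear scan over the $n+1$ coefficients therefore produces the triple $(n_+,n_-,n_0)$ in $O(n)$ arithmetic operations on integers of bit length $n^{1+o(1)}$, a total cost comfortably absorbed within the $n^{2+o(1)}$ verification budget.

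Finally, each definiteness predicate is read off from the signature: $A\succ 0$ iff $n_+=n$, $A\succeq 0$ iff $n_-=0$, $A\prec 0$ iff $n_-=n$, and $A\preceq 0$ iff $n_+=0$. Hence one and the same characteristic polynomial certificate from Theorem~\ref{th:charp} serves every case, establishing the corollary. The only delicate point — not so much an obstacle as a classical fact to invoke carefully — is the sharpness of Descartes' rule for real-rooted polynomials; this adds no new probabilistic or cryptographic hypothesis beyond those already present in Theorem~\ref{th:charp}.
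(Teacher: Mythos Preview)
Your proof is correct and follows the same overall strategy as the paper: certify the characteristic polynomial via Theorem~\ref{th:charp} and then extract the signature from it. The paper's own proof is a single sentence deferring the extraction step to \cite[Corollary~1]{Kaltofen:2011:quadcert}; you have made that step explicit via Descartes' rule of signs for real-rooted polynomials, and have also spelled out the bit-size accounting (why $\lognormA = n^{o(1)}$ under the minors-bound hypothesis, and why the coefficients of $p$ occupy $n^{2+o(1)}$ bits in total) that the paper leaves implicit in the citation.
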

\begin{proof}
  We just use the certificate of
  \cite[Corollary~1]{Kaltofen:2011:quadcert} but replace their
  characteristic polynomial certificate by the interactive one of
  Figure~\ref{fig:charp} and Theorem~\ref{th:charp}.
\end{proof}

\section{Interactive certificate for{\customlinebreak} the rank of sparse matrices}\label{sec:sparserank}
 
Now we turn to matrices over any domain and count arithmetic
operations instead of bit complexity. That is to say we consider that the four
arithmetic operations over the domain, plus say equality testing, random
sampling of one element, etc., are all counted as $1$ operation.
 
For the sake of simplicity we will use the notation $\F$ as for finite fields
but the results are valid over any abstract field, provided that the random
sampling is done on a finite subset $S$ of the domain. 

We improve on $\bigO{n^2}$ certificates for the rank (given
with say an $LU$ factorization), when the matrix is sparse, structured or given
as blackbox. That is to say when the product of the matrix by a vector 
requires strictly less arithmetic operations than what would be required if the
matrix was dense. 
 
If the matrix is given as a blackbox, then the only possible operation with the
matrix is the latter matrix-times-vector product.
 
For a matrix of rank $r$, if $\Omega$ is the cost of one of those
matrix-times-vector product, the blackbox certificates
of~\cite{Saunders:2004:rankcert} would also require $\bigO{nr}$ extra arithmetic
operations and at least $\bigO{r}$ extra matrix-times-vector products for a
total of $\bigO{r\Omega+nr}$ arithmetic operations.

In the following we show that it is possible to reduce the time and space
complexity bounds of verifying certificates for the rank of blackbox matrices to
only $2\Omega+n^{1+o(1)}$ arithmetic operations. 
This is essentially optimal, e.g. for sparse matrices, as reading and
storing a matrix of dimensions $n\times n$ should also require
$\bigO{\Omega+n}$ operations. 
 
We then extend this result, but in bit complexity, to also certify the rank of
integer matrices with essentially the same optimal complexity bounds.

We proceed in two steps. First we certify that there exists an
$r\times r$ non-singular minor in the matrix. Second, we precondition the matrix
so that it is of generic rank profile and exhibit a vector in the null-space of
the leading $(r+1)\times(r+1)$ minor of the preconditioned matrix. 
 
\subsection{Certifying non-singularity}

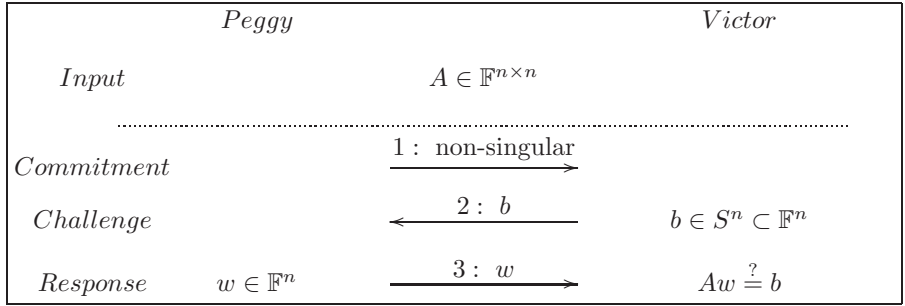
\begin{figure*}[htbp]\center 
  \noindent\resizebox{.65\linewidth}{!}{$$
    \xy
    \xymatrix@R=7pt@C=1pc@W=15pt{
      &Peggy &&&& Victor\\
      Input&&& A\in\F^{n\times n} &&\\
      \ar@{.}[rrrrrr]&&&&&&\\
      Commitment& & \ar[rr]^*[*1.]{1:\ \text{non-singular}} &&&\\
      Challenge& & && \ar[ll]_*[*1.]{2:\ b}& b\in S^n\subset\F^n\\
      Response& w\in\F^n&\ar[rr]^*[*1.]{3:\ w}&&& Aw\stackrel{?}{=}b \\
    }\POS*\frm{-}
    \endxy
    $$
  }
  \caption{Blackbox interactive certificate of non-singularity}\label{fig:nonsing} 
\end{figure*}

\begin{theorem}\label{thm:nonsing}
  Let $S$ be a finite subset of $\F$ with at least two distinct elements.
  For $A\in\F^{n\times n}$, whose matrix-times-vector products costs $\Omega$
  operations in $\F$, the interactive certificate of Figure~\ref{fig:nonsing}
  for non-singularity is sound, perfectly complete and the number of arithmetic
  operations performed by the verifier is bounded by $\Omega+n^{1+o(1)}$.

\end{theorem}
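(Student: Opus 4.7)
The plan is to verify the three required properties of an interactive certificate: perfect completeness, soundness, and efficiency of Victor. Perfect completeness is immediate: if $A$ is non-singular, then for any challenge $b \in S^n$ the honest prover can return $w = A^{-1}b$, and Victor's test $Aw \stackrel{?}{=} b$ succeeds. So nothing is at stake on the honest side.

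For soundness the approach I would take is the standard hyperplane argument. If $A$ is singular, then the image $A\F^n$ is a proper subspace of $\F^n$, and in particular there exists a non-zero vector $v \in \F^n$ with $v^T A = 0$, so that $A\F^n \subseteq \{x \in \F^n : v^T x = 0\}$. A dishonest Peggy can produce any $w$ with $Aw = b$ only when $b$ itself lies in this hyperplane. Crucially, $v$ depends only on the public input $A$ and is fixed \emph{before} Victor samples $b$; applying the DeMillo--Lipton--Schwartz--Zippel bound to the non-zero linear polynomial $v^T b$ in the coordinates $b_1,\ldots,b_n$ (each drawn independently and uniformly from $S$) yields $\Pr[v^T b = 0] \le 1/\card{S}$. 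Iterating the protocol $k$ times with independent challenges drives the soundness error to $(1/\card{S})^k$, which can be made arbitrarily small since $\card{S} \ge 2$. The step to emphasize here is the ordering: because $A$ is public and $b$ arrives after the commitment, Peggy cannot adapt $A$ to contain a targeted $b$ in its image.

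For the efficiency bound, Victor's work consists of (a) sampling $b \in S^n$ in $n$ independent draws, (b) evaluating $Aw$ through the blackbox in $\Omega$ arithmetic operations, and (c) checking the $n$ coordinate equalities $Aw = b$. The total is $\Omega + \bigO{n}$, which is within $\Omega + n^{1+o(1)}$ as claimed, and no storage beyond $b$, $w$ and $A$ itself is required.

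The main obstacle, if one can call it that, is conceptual rather than technical: confirming that the hyperplane $v$ witnessing singularity is committed by the public input $A$ prior to the challenge, so that DeMillo--Lipton--Schwartz--Zippel genuinely applies. Once that is in place, the rest of the proof is routine linear algebra and a one-line complexity accounting.
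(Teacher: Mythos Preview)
Your proof is correct and matches the paper's own argument: the paper establishes soundness via a $PLUQ$ factorization to exhibit the linear constraint on $b$, but explicitly notes in a footnote that the same $1/\card{S}$ bound follows from Schwartz--Zippel applied to a linear form vanishing on the image of $A$, which is exactly your left-kernel-vector argument. The completeness and complexity portions are identical to the paper's.
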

\begin{proof}
  If Peggy is honest, then she can solve the system with an
  invertible matrix and provide $w=A^{-1}b$ to Victor. Therefore the protocol is
  perfectly complete.\\  
   
  If Peggy is dishonest, then it means that $A$ is singular. 
  Therefore, it means that the rank of $A$ is at most $n-1$.
   
  We use, e.g., Gaussian elimination to get $A=PLUQ$, 
  where $P$ and $Q$ are permutation matrices, $L$ is unit invertible lower
  triangular and $U$ is upper triangular. As the rank of $A$ is at most $n-1$,
  $U$ is of the form
  $\left[\begin{smallmatrix}U_1&U_2\\0&0\end{smallmatrix}\right]$ where
  $U_1\in\F^{(n-1)\times (n-1)}$ is upper triangular.
  One then sees
  that making the system inconsistent is equivalent to setting to
  zero at least the last entry of $L^{-1}P^{-1}b$ in $\F^n$.

  Thus, with probability at least $1-1/\card{S}$, the challenge vector
  proposed by Victor makes the system inconsistent\footnote{Alternatively, one
    can get the same result with the Schwartz-Zippel lemma applied to the linear
    function whose kernel is the nullspace of $A$ as in
    e.g.~\cite[Theorem~2.2]{Giesbrecht:1998:CIS}.}. 
  In the latter case, Peggy will never be able to find a solution to the
  system. 

  Thus Victor can accept the certificate of Peggy only when he
  has randomly found a consistent vector. 
  The probability that this happens $k$ times with $k$ independent
  selections of $b$ is bounded by $\frac{1}{\card{S}^k}$.
  Therefore, when the matrix is singular, Victor can accept repeated
  applications of the protocol only with negligible probability and the
  protocol is sound.\\

  For the complexity, Victor needs to perform one matrix-times-vector product
  with $A$, of arithmetic complexity $\Omega$. 
  Victor also needs to produce a random vector of size $n$ of elements
  in $S$ and perform a vector equality comparison.

\end{proof}

\subsection{Certifying an upper bound for the rank}
For an upper bound, we precondition $A\in\F^{m\times n}$ of rank~$r$ so that the
leading $r\times r$ minor of the preconditioned matrix is non-zero and then
present a non-zero vector in the nullspace of the $r+1$ leading minor. 
We use the butterfly probabilistic preconditioners of 
\cite[Theorem 6.3]{Chen:2002:EMP} that can precondition an $n\times n$ matrix of
rank $r$ so that the first $r$ rows of the preconditioned matrix become linearly
independent with high probability. 
We denote by $\B_S^{n\times n}$ the set of such butterfly networks composed by
less than $n(\log_2(n))$ switches of the form
$\left[\begin{array}{cc}1&\alpha\\1&1+\alpha\end{array}\right]$, for
$\alpha\in S\subset\F$. Choosing a random butterfly reduces to choosing an element
$\alpha$, a row index and a column index for each of its switches.

\cite[Theorem 6.3]{Chen:2002:EMP} works for square matrices but can easily be
extended to work for rectangular matrices as follows.
\begin{lemma}\label{lem:genbut}
  Let $\F$ be a field and $S$ be a finite subset of $\F$.
  Let $A$ be an $m\times n$ matrix over $\F$
  with $r$ linearly independent rows. Then a butterfly preconditioner
  $U\in\B_S^{m\times m}$ will make the first $r$ rows of $UA$ linearly independent
  with probability not less than $1-\frac{r\lceil \log_2(m)\rceil}{|S|}$.
\end{lemma}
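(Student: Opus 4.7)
The plan is to reduce the rectangular case to the already established square case of \cite[Theorem~6.3]{Chen:2002:EMP} by extracting a full-rank square submatrix and padding it with zeros, so that the butterfly preconditioner's action is preserved on the relevant columns.

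First I would observe that the conclusion depends only on the row space structure after the action of $U$ on the left, so the number of columns $n$ plays essentially no role in the statement --- only the left dimension $m$ (which governs the size and number of switches of the butterfly) and the number $r$ of independent rows matter. Concretely, because $A$ has $r$ linearly independent rows $i_1,\ldots,i_r$, the submatrix $A'\in\F^{r\times n}$ formed by those rows has rank $r$, hence admits an $r\times r$ nonsingular minor on some columns $j_1,\ldots,j_r$. Let $A_1\in\F^{m\times r}$ be the restriction of $A$ to these columns; by construction its rows $i_1,\ldots,i_r$ are still linearly independent.

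Next, I would pad $A_1$ with $m-r$ zero columns to form a square matrix $\tilde A\in\F^{m\times m}$. This $\tilde A$ has exactly the same $r$ linearly independent rows as $A_1$, and in particular still has rank $r$. Applying \cite[Theorem~6.3]{Chen:2002:EMP} to the square matrix $\tilde A$, a random butterfly $U\in\B_S^{m\times m}$ makes the first $r$ rows of $U\tilde A$ linearly independent with probability at least $1-\frac{r\lceil\log_2(m)\rceil}{|S|}$, which is already the bound announced in the lemma.

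Finally, I would transfer this conclusion back to $A$ itself: the nonzero columns of $U\tilde A$ are precisely the columns of $UA_1$, so the first $r$ rows of $U\tilde A$ being linearly independent is equivalent to the first $r$ rows of $UA_1$ being linearly independent. Since $UA_1$ is the submatrix of $UA$ on columns $j_1,\ldots,j_r$, each row of $UA_1$ embeds into the corresponding row of $UA$, and independence of the shorter rows implies independence of the longer ones. Hence the first $r$ rows of $UA$ are linearly independent with the same probability. The only real subtlety --- and the step I would double-check most carefully --- is that Chen--Eberly--Kaltofen--Saunders' bound depends on $m$ (via the butterfly's number of switches) and not on the column dimension of $\tilde A$, so that padding with zero columns does not inflate the probability bound; this is exactly what their Schwartz--Zippel argument on the $r\times r$ leading minor provides, and it carries over unchanged.
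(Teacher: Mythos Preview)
Your argument is correct. Both approaches rest on the same underlying observation---that the butterfly acts purely on rows, so the column dimension is irrelevant---but you and the paper exploit it differently. The paper's proof simply re-reads the proof of \cite[Theorem~6.3]{Chen:2002:EMP} and notes that every step there manipulates rows of $A$ without ever using that $A$ is square, so the argument carries over verbatim to an $m\times n$ matrix with the same bound $1-r\lceil\log_2(m)\rceil/|S|$. You instead give a formal black-box reduction: select $r$ columns witnessing the row independence, pad to an $m\times m$ matrix $\tilde A$, invoke the square theorem as stated, and pull the conclusion back through the column inclusion $UA_1\hookrightarrow UA$. Your route has the virtue that it applies Theorem~6.3 as a statement rather than as a proof, at the cost of the extra bookkeeping with $A_1$ and~$\tilde A$; the paper's route is a one-line observation but requires the reader to trust (or reverify) that nothing in the original Schwartz--Zippel count depends on the column count. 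Your closing caveat about the bound depending on $m$ and not on the column dimension of $\tilde A$ is slightly moot, since $\tilde A$ is already $m\times m$ by construction, so Theorem~6.3 applied to it yields exactly $1-r\lceil\log_2(m)\rceil/|S|$ with no further inspection needed.
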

\begin{proof}
From  
\cite[Theorem 6.2]{Chen:2002:EMP}, 
we know that a depth $\lceil\log_2(m)\rceil$ butterfly network can switch any
$r\leq m$ indices into the continuous block $1,2,\ldots,r$. 
The proof of \cite[Theorem 6.3]{Chen:2002:EMP} uses the butterfly to
permute and combine the rows of $A$ independently of its number of columns.
Therefore $UA$ will have its first $r$ rows linearly independent with probability not less than $1-\frac{r\lceil \log_2(m)\rceil}{|S|}$.
\end{proof}
We can thus now turn to our interactive certificate of an upper bound to the rank.

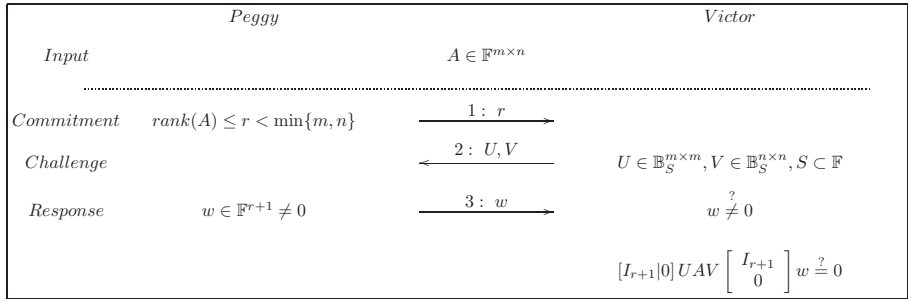
\begin{figure*}[htbp]\center 
  \noindent\resizebox{0.9\linewidth}{!}{$$
    \xy
    \xymatrix@R=7pt@C=1pc@W=15pt{
      &Peggy &&&& Victor\\
      Input&&& A\in\F^{m\times n} &&\\
      \ar@{.}[rrrrrr]&&&&&&\\
      Commitment&rank(A) \leq r < \min\{m,n\} &\ar[rr]^*[*1.]{1:\ r} &&\\
      Challenge& & & & \ar[ll]_*[*1.]{2:\ U,V}&
      U\in\B_S^{m\times m},V\in\B_S^{n\times n}, S\subset\F\\
      Response& 
      w\in\F^{r+1}\neq 0&\ar[rr]^*[*1.]{3:\ w}&&& 
      w \stackrel{?}{\neq}0\\
      &&&&&\left[ I_{r+1}|0 \right] U A V \ensuremath{\left[\begin{array}{cc}I_{r+1}\\0\end{array} \right]} w \stackrel{?}{=}0
      \\
    }\POS*\frm{-}
    \endxy
    $$
  }
  \caption{Blackbox interactive certificate for an upper bound to the
    rank}\label{fig:upperbound} 
\end{figure*}

\begin{theorem}\label{thm:upper}  
  Let $A\in\F^{m\times n}$, whose matrix-times-vector products costs $\Omega$
  operations in $\F$ and let $S$ be a finite subset of $\F$ with 
  $\card{S}> 2\min\{m,n\}(\lceil\log_2(m)\rceil+\lceil\log_2(n)\rceil)$.  
  The interactive certificate of Figure~\ref{fig:upperbound} proving an
  upper bound for the rank of $A$ is sound, perfectly complete and the number of
  arithmetic operations performed by the verifier is bounded by

  $\Omega+(m+n)^{1+o(1)}$.
\end{theorem}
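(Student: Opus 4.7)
The plan is to verify the three required properties—perfect completeness, soundness, and efficiency—in turn, with Lemma~\ref{lem:genbut} applied twice (once on the left and once, symmetrically, on the right) as the main probabilistic ingredient.

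For perfect completeness, I would observe that every butterfly switch $\left[\begin{smallmatrix}1&\alpha\\1&1+\alpha\end{smallmatrix}\right]$ has determinant $1$, so $U$ and $V$ are invertible and $\text{rank}(UAV)=\text{rank}(A)\le r$. Hence the leading $(r+1)\times(r+1)$ principal submatrix $M:=[I_{r+1}|0]\,UAV\,[I_{r+1}|0]^T$ has rank at most $r$, and an honest Peggy can always exhibit a nonzero $w\in\ker M$.

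For soundness, assume $\text{rank}(A)\ge r+1$. The strategy is to show that with probability strictly greater than $1/2$ over $(U,V)$ the matrix $M$ is non-singular; then the only vector in $\ker M$ is $0$, so Victor rejects. Applying Lemma~\ref{lem:genbut} to $A$ with the left preconditioner $U$ and rank target $r+1$, the first $r+1$ rows of $UA$ fail to be linearly independent with probability at most $(r+1)\lceil\log_2 m\rceil/|S|$. On the complementary event, $B:=[I_{r+1}|0]\,UA$ has rank $r+1$, so $B^T$ also has rank $r+1$, and I would apply the lemma symmetrically on the right (via transposition: independence of the first $r+1$ columns of $BV$ is equivalent to independence of the first $r+1$ rows of $V^TB^T$), obtaining that the first $r+1$ columns of $BV$ are linearly dependent with probability at most $(r+1)\lceil\log_2 n\rceil/|S|$. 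A union bound together with $r+1\le\min\{m,n\}$ and the hypothesis on $|S|$ then bounds the total failure probability by $(r+1)(\lceil\log_2 m\rceil+\lceil\log_2 n\rceil)/|S|<1/2$. Iterating the protocol drives the false-acceptance probability to zero.

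For efficiency, sampling the $O((m+n)\log(m+n))$ switches of $U$ and $V$ and applying them to a single padded vector costs $(m+n)^{1+o(1)}$ field operations; the only blackbox call is one matrix--vector product with $A$ at cost $\Omega$; the remaining checks (nonzero test on $w$, zero test on the truncated $(r+1)$-vector, and the coordinate truncations themselves) are linear in the dimensions. Altogether the verifier runs in $\Omega+(m+n)^{1+o(1)}$ operations. The main obstacle I anticipate is justifying the symmetric application of Lemma~\ref{lem:genbut} on the right, since the lemma is stated only for left-multiplication by a butterfly; I would handle this by noting that transposing $\left[\begin{smallmatrix}1&\alpha\\1&1+\alpha\end{smallmatrix}\right]$ yields $\left[\begin{smallmatrix}1&1\\\alpha&1+\alpha\end{smallmatrix}\right]$, which is structurally equivalent up to a row swap, so the underlying counting argument of Chen et al.\ goes through verbatim on $V^TB^T$.
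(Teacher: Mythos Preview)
Your proposal is correct and follows essentially the same approach as the paper: perfect completeness from $\text{rank}(M)\le r$, soundness via two applications of Lemma~\ref{lem:genbut} (left and right) combined by a union bound under the hypothesis on $|S|$, and the same cost accounting for the butterfly preconditioners plus one blackbox product. Your sequential conditioning in the soundness step (first fix $U$ so that $B=[I_{r+1}|0]UA$ has full row rank, then apply the lemma to $BV$) is in fact slightly more rigorous than the paper's parallel phrasing, and your explicit remark on the transposition needed for the right-hand application is a point the paper leaves implicit.
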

\begin{proof}
  If Peggy is honest this means that the rank of~$A$ is upper bounded by
  $r<\min\{m,n\}$. Thus the rank of 
  $M=\left[ I_{r+1}|0 \right] U A V \left[\begin{array}{cc}I_{r+1}\\0\end{array}
  \right]\in\F^{(r+1)\times (r+1)}$ is also upper bounded by~$r$. 
  Therefore, there exist at least one non-zero vector~$w$ in the nullspace of
  $M$. Hence Peggy can produce it and the protocol is perfectly complete.\\
   
  If Peggy is dishonest, this means that the rank of $A$ is at least $r+1$.
   
  Now, from Lemma~\ref{lem:genbut}, 
  the butterfly preconditioner $U\in\B_S^{m\times m}$ will make the first $r+1$
  rows of $UA$ linearly dependent with probability less than
  $\frac{(r+1)\lceil\log_2(m)\rceil}{\card{S}}$. Similarly the butterfly preconditioner
  $V\in\B_S^{n\times n}$ will make the first $r+1$ columns of $AV$ linearly
  dependent with probability less  than 
  $\frac{(r+1)\lceil\log_2(n)\rceil}{\card{S}}$. 
  Overall the $(r+1)\times(r+1)$ leading principal minor of $UAV$ will be non-zero
  with probability at least
  $1-\frac{(r+1)(\lceil\log_2(m)\rceil+\lceil\log_2(n)\rceil)}{\card{S}}$.
  The latter is greater than
  $1-\frac{\min\{m,n\}(\lceil\log_2(m)\rceil+\lceil\log_2(n)\rceil)}{\card{S}}\geq \frac{1}{2}$.
  In this case the minor is
  invertible and Peggy will never be able to produce a non-zero vector in its
  kernel. The only possibility for Victor to accept the certificate is thus that
  the leading minor is zero and the probability that this happens $k$ times with
  $k$ independent selections of $U$ and $V$ is thus bounded by $\frac{1}{2^k}$. 
  Thus Victor can accept repeated applications of the protocol only with
  negligible probability and the protocol is sound.\\
  
  For the complexity, \cite[Theorem 6.2]{Chen:2002:EMP} gives us that
  butterflies of respective sizes $m\lceil\log_2(m)\rceil/2$ and
  $n\lceil\log_2(n)\rceil/2$ are sufficient. 
   
  Victor thus needs to produce
  $(m\lceil\log_2(m)\rceil/2 + n\lceil\log_2(n)\rceil/2)$ random
  elements in $S$ and also $\bigO{\log_2(m)+\log_2(n)}$ random bits
  for the row and columns indices. 
  Then the successive applications of $U$, $A$ and $V$ to a vector cost no more
  than $3m\lceil\log_2(m)\rceil/2+\Omega+3n\lceil\log_2(n)\rceil/2$ arithmetic operations.
\end{proof}

\begin{remark}
{\upshape 
  Over small fields, 
   
  it might not be possible to find a sufficiently large subset $S$. 
  Then one can use extension fields or change the preconditioners.
  For instance, $\tilde{U}\in\W^{(r+1)\times m}$, and respectively
  $\tilde{V}\in\W^{n\times (r+1)}$, can be taken as sparse matrix preconditioners, as in
  \cite{Wiedemann:1986:SSLE} (see also \cite[Corollary 7.3]{Chen:2002:EMP}), and
  replace respectively $\left[ I_{r+1}|0 \right] U$ and $V
  \ensuremath{\left[\begin{array}{cc}I_{r+1}\\0\end{array} \right]}$.
  They are randomly sampled with the 
  Wiedemann distribution, denoted by $\W$, and have thus not more than
  $2n(2+\log_2(n))^2$ non zero entries with probability at least $1/8$,
  \cite[Theorem 1]{Wiedemann:1986:SSLE}. 
\qed} 
\end{remark}

\subsection{Blackbox interactive certificate for the{\customlinebreak} rank}  
Now we can propose a complete certificate for the rank. 
 
If the matrix is full rank, then it is sufficient to produce a certificate for
a maximal lower bound.
Otherwise, it will use a non-singularity certificate on a sub-matrix of
dimension $r\times r$ together with an upper
bound certificate: 
for a matrix $A\in\F^{m\times n}$,
\begin{enumerate}
\item 
   
  Peggy computes $I^{(r)}\in\F^{r\times m}$
  (resp. $J^{(r)}\in\F^{n\times r}$) a concatenation of $r$ distinct
    row canonical vectors (resp. $r$ distinct column canonical
    vectors) such that $I^{(r)} A J^{(r)}$ is a non-singular
    submatrix of $A$;
    and produces the non-singularity certificate of
  Figure~\ref{fig:nonsing} on the latter. This
  provides a certified lower bound for the rank of $A$.
\item Peggy provides the certificate for an upper bound $r$ of the rank of $A$
  of Figure~\ref{fig:upperbound}. 
\item With certified lower bound $r$ and upper bound $r$, the rank
  is certified.
\end{enumerate}
Using Theorems~\ref{thm:nonsing} and~\ref{thm:upper}, we have proven:
\begin{corollary}\label{cor:sparserank}
 
  Let $A\in\F^{m\times n}$, whose matrix-vector products 
  costs $\Omega$ operations in $\F$ and let $S$ be a finite subset of~$\F$ with 
  $\card{S}> 2\min\{m,n\}(\lceil\log_2(m)\rceil+\lceil\log_2(n)\rceil)$.  
  The above $\sum$-protocol provides an interactive certificate for the rank
  of~$A$.
  This interactive certificate is sound, perfectly complete and the number of
  arithmetic operations performed by the verifier is bounded by
   
  $2\Omega+(m+n)^{1+o(1)}$.
\end{corollary}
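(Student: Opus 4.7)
The plan is to stitch together the two sub-protocols of Theorems~\ref{thm:nonsing} and~\ref{thm:upper}: use the non-singularity protocol on some $r\times r$ submatrix of $A$ to certify $\text{rank}(A)\geq r$, and use the upper bound protocol to certify $\text{rank}(A)\leq r$, so that the two sandwich the rank to exactly $r$. Concretely, Peggy first commits to the integer $r$. If $r=\min\{m,n\}$, only the non-singularity sub-protocol is needed (on a chosen $r\times r$ submatrix); otherwise both sub-protocols are run in sequence, sharing the same $r$.

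For perfect completeness, if $\text{rank}(A)=r$ then Peggy can exhibit row and column selection matrices $I^{(r)}$ and $J^{(r)}$ such that $I^{(r)}AJ^{(r)}\in\F^{r\times r}$ is non-singular (any maximally independent $r$-subset of rows and columns works), and she can also produce a non-zero vector in the kernel of the $(r+1)\times(r+1)$ leading minor of the preconditioned matrix $UAV$ whenever that minor is rank-deficient. Perfect completeness then transports directly from the two cited theorems.

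For soundness, I would argue by cases on how Peggy could cheat. If her committed $r$ satisfies $r>\text{rank}(A)$ then no $r\times r$ submatrix of $A$ is non-singular and the non-singularity sub-protocol rejects with probability at least $1-1/\card{S}$ per round; if $r<\text{rank}(A)$ then Theorem~\ref{thm:upper} guarantees that the upper bound sub-protocol rejects with probability at least $1/2$ per round. A union bound over the two sub-protocols, amplified by independent repetition, drives the total cheating probability below any desired threshold.

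The main step to be careful about is the verifier complexity accounting. Verifying the non-singularity response $w\in\F^r$ via $I^{(r)}AJ^{(r)}w=b$ reduces to zero-extending $w$ to a length-$n$ vector, calling the blackbox for $A$ once at cost $\Omega$, and projecting the resulting length-$m$ vector to $r$ coordinates, for a total of $\Omega+(m+n)^{1+o(1)}$ operations. Verifying the upper bound response similarly costs one blackbox call of cost $\Omega$ plus $(m+n)^{1+o(1)}$ operations to apply the butterflies and to sample them from $S$, by Theorem~\ref{thm:upper}. Summing the two sub-costs yields $2\Omega+(m+n)^{1+o(1)}$, with the soundness amplification and all logarithmic factors absorbed into the $o(1)$ term.
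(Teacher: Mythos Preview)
Your proposal is correct and follows exactly the same approach as the paper: combine the non-singularity certificate of Theorem~\ref{thm:nonsing} on a chosen $r\times r$ submatrix (lower bound) with the upper bound certificate of Theorem~\ref{thm:upper}, handling the full-rank case separately. In fact the paper's own proof is a single sentence pointing to those two theorems; your write-up supplies the soundness case split and the verifier-cost bookkeeping that the paper leaves implicit.
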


\subsection{Blackbox interactive certificate for the{\customlinebreak} rank of integer matrices}
This rank certificate can also be used for integer matrices, at roughly the same
cost: just use the sparse certificate modulo a randomly chosen prime $p$.

Let $H=\min\{\sqrt{n}^m ||A||_\infty^m,\sqrt{m}^n ||A||_\infty^n\}$ be
Hadamard's bound for the invariant factors of $A$. Let $h=\log_2(H)$, there
cannot be more than $h$ primes reducing the 
rank. Therefore, the protocol is as follows:
\begin{enumerate}
\item Peggy produces the rank $r$ of $A$.
\item Victor selects $p$ randomly from a set with, say, $c\cdot h$ primes, for a
  constant $c>2$ (if the rank is correct then the probability that the rank of
  $A$ will be reduced modulo $p$ is then less than $1/2$).
\item Peggy produces the sparse rank certificate of
  Corollary~\ref{cor:sparserank} over the field of integers modulo $p$, that is,
  two vectors $w_1\in \Zp^n$ and $w_2\in\Zp^{r+1}$.
\item Victor checks the certificate of the rank modulo $p$ (to apply $A$ in
  $\Zp$, Victor applies $A$ over $\Z$ and then reduces the resulting vector
  modulo $p$).
\end{enumerate}

\begin{theorem}
  Let $A\in\Z^{m\times n}$, whose matrix-times-vector products
  costs $\Omega$ arithmetic operations in $\Z$.\\
   
  Let {\small $\mu=\max\{\lognormA,\log(m+n)+\log(\frac{1}{2}\log(m+n)+\lognormA)\}$}.
  The above $\sum$-protocol provides an interactive certificate for the rank
  of~$A$.
  This interactive certificate is sound, perfectly complete and the number of
  bit operations performed by the verifier is bounded by 
  $\left(2\Omega+(m+n)^{1+o(1)}\right)\mu^{1+o(1)}$
\end{theorem}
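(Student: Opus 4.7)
The plan is to reduce to Corollary~\ref{cor:sparserank} by running the sparse rank protocol in $\mathbb{Z}/p\mathbb{Z}$ for a prime $p$ drawn from a set large enough that, with good probability, the rank is preserved under reduction. The argument splits into completeness, soundness, and complexity, and the essentially optimal bit cost will come from carefully tracking the size of the integers produced during one matrix-times-vector product over~$\mathbb{Z}$.

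For completeness, let $r$ denote the true rank of $A$ over $\mathbb{Z}$. The rank drops modulo $p$ precisely when $p$ divides the $r$-th determinantal divisor $d_r$ of $A$, and since $d_r \le H$ by Hadamard's bound, there are at most $h = \log_2 H$ such ``bad'' primes. Victor samples $p$ uniformly from a set of $c\cdot h$ primes with $c > 2$, so with probability at least $1 - 1/c > 1/2$ the chosen $p$ is good, $\mathrm{rank}(A \bmod p) = r$, and Peggy can invoke the perfectly complete protocol of Corollary~\ref{cor:sparserank} on $A \bmod p$. Independent repetitions drive the acceptance probability arbitrarily close to~$1$.

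For soundness, suppose Peggy commits a value $r' \neq r$. If $r' > r$, then $\mathrm{rank}(A \bmod p) \leq r < r'$ for every prime $p$, so no $r' \times r'$ invertible sub-matrix exists modulo $p$ and the non-singularity sub-certificate of Figure~\ref{fig:nonsing} is rejected with the probability guaranteed by Theorem~\ref{thm:nonsing}. If $r' < r$, then for a good prime (again chosen with probability $\geq 1 - 1/c$) we have $\mathrm{rank}(A \bmod p) = r > r'$, so by Lemma~\ref{lem:genbut} the preconditioned leading $(r'+1)\times(r'+1)$ minor is non-singular with probability $\ge 1/2$ and the upper-bound sub-certificate of Figure~\ref{fig:upperbound} is rejected by Theorem~\ref{thm:upper}. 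Combining the bad-prime probability with the two internal soundness errors yields a constant rejection probability per round, which is amplified by repetition to negligible.

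For the complexity, an effective PNT estimate gives primes of magnitude $O(h \log h)$ in the sampling set, so $\log p = \log h + O(\log\log h)$. A short calculation shows $h \le (m+n)\bigl(\tfrac12 \log(m+n) + \lognormA\bigr)$, hence $\log p \le \log(m+n) + \log\bigl(\tfrac12 \log(m+n) + \lognormA\bigr) \le \mu$. Victor's dominant task is one product $Av$ evaluated over $\mathbb{Z}$ where $v \in (\mathbb{Z}/p\mathbb{Z})^n$ is lifted to integers of size $\log p \le \mu$; each coordinate of $Av$ is then bounded by $n\|A\| p$ in magnitude, i.e.\ has $O(\lognormA + \log p) = O(\mu)$ bits. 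With fast integer arithmetic each of the $\Omega$ ring operations therefore costs $\mu^{1+o(1)}$ bit operations, and the subsequent reduction modulo $p$ and the two mod-$p$ sub-certificate checks contribute at most $(m+n)^{1+o(1)}$ field operations, each of bit cost $(\log p)^{1+o(1)} \leq \mu^{1+o(1)}$. Summing gives the claimed $\bigl(2\Omega + (m+n)^{1+o(1)}\bigr)\mu^{1+o(1)}$ bound. The main obstacle, and the reason $\mu$ appears rather than merely $\lognormA$, is this size inflation: showing that the extra $\log p$ bits in $v$ can be absorbed into the $\mu^{1+o(1)}$ factor requires the two-term definition of~$\mu$, since for sparse matrices $\Omega$ may be as small as $n^{1+o(1)}$ and a cruder bound would dominate it.
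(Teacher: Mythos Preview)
Your approach matches the paper's: reduce to Corollary~\ref{cor:sparserank} modulo a random prime $p$, bound the number of rank-reducing primes by $h=\log_2 H$ via Hadamard, and track the bit cost through~$\mu$. Your soundness case split ($r'>r$ versus $r'<r$) is more explicit than the paper's one-line appeal to ``the number of primes reducing the rank is less than~$h$,'' and your complexity bookkeeping (prime size $O(h\log h)$, hence $\log p\le\mu$, hence each of the $\Omega$ integer operations costs $\mu^{1+o(1)}$ bits) is essentially the paper's argument with $\eta=(\log h)^{1+o(1)}$ unfolded.

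One substantive divergence is worth flagging: you establish only \emph{probabilistic} completeness, since an honest prover who has committed to the true rank~$r$ cannot exhibit an $r\times r$ non-singular submatrix modulo a bad prime. The theorem as stated asserts perfect completeness, and the paper's proof simply writes ``Completeness is ensured from Corollary~\ref{cor:sparserank}'' without addressing the bad-prime event. Your treatment is in fact the more accurate one for the protocol as described; the paper glosses over this point, so do not be troubled that your argument does not recover the ``perfectly complete'' clause verbatim.
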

\begin{proof}
From  
the prime number theorem, we know that the $h$-th prime number is 
$O(h \log(h))$. Therefore it is possible to sample $c \cdot h$ distinct primes of
magnitude bounded by $O(h \log(h))$. Then, the cost of fast arithmetic modulo
any of these primes can be bounded by
$\eta=O(\log(h)^{1+o(1)})$
 
bit operations.
Then $\mu^{1+o(1)}\geq \max\{\lognormA,\eta\}^{1+o(1)}$ and applying $A$ to a vector in
$\Zp$ costs no more than $\Omega\mu^{1+o(1)}$ binary operations. Reducing the
coefficients of the resulting vector modulo $p$ then costs
$\left(\log(m)+\lognormA+\eta\right)^{1+o(1)}$, that is not more than the
matrix-times-vector product. 
 
From  
Corollary~\ref{cor:sparserank}, the cost of the certificate becomes 
$2\Omega\mu^{1+o(1)}+(n+m)\eta^{1+o(1)}$.  

Completeness is ensured from Corollary~\ref{cor:sparserank} and soundness by the
fact that the number of primes reducing the rank is less than $h$. 
Indeed, if Victor then samples from $c\cdot h$ primes with $c>2$, he can accept
$k$ repeated wrong ranks with probability bounded by $\frac{1}{2^k}$.
\end{proof}
Note that if the matrix is sparse with $(m+n)^{1+o(1)}$ elements, then 
$\Omega$ is $(m+n)^{1+o(1)}$ and the overall cost of the rank interactive
certificate over the integers remains essentially linear in 
the input size  
$(m+n)\lognormA$.  

\subsection{Reducing breaking the random oracle to
  factorization}\label{ssec:bbs}
Now we look at the derandomization of the previous certificates using the strong
Fiat-Shamir heuristic, see Section~\ref{ssec:fiatshamir},
where the random challenge messages of the verifier are replaced by a
cryptographic hash of the property and the commitment messages.

First, it is proven in~\cite{Pointcheval:1996:provsecsign}
that this methodology always produces digital signature schemes that
are provably secure against chosen message attacks in the "Random
Oracle Model" -- when the hash function is modeled by a random
oracle. 
In other words, it is equivalent for a dishonest Peggy to e.g. produce
consistent systems for singular matrices or to break the random oracle.

Second, we can, e.g., use the Blumb-Blum-Shub perfect
random generator~\cite{Blum:1982:BBS}:
it transforms a seed $x_0$ (for us the matrix) into a bit string
$b_1,\ldots,b_k$ with $b_i=x_i \mod 2;~x_{i+1}=x_i^e\mod N$ for some
RSA public key $(e,N)$. 
It is for instance shown in \cite{Fischlin:1997:SSP} that
knowing a number of $b_i$'s polynomial in the size of $N$, say
a number $B_N=\bigO{(\log N)^\gamma}$  
bits, enables one to factor it. 
Now, we show next that forging a consistency certificate
$Ax\stackrel{?}{=}b$ is equivalent
to predicting the value of at least one bit of the random right-hand
side vector~$b$. 
\begin{lemma} 
  Forging the non-singularity certificate fixes at least one bit of the random
  right-hand side vector.
\end{lemma}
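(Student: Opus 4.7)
My plan is to exhibit a non-trivial linear constraint that the random challenge vector $b$ must satisfy whenever Peggy successfully forges the non-singularity certificate on a singular matrix, and then translate that constraint into the claimed bit-level predictability.

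First I would formalize what ``forging'' means: Peggy commits to a matrix $A\in\F^{n\times n}$ as non-singular, receives a challenge $b$, and produces $w\in\F^n$ with $Aw=b$, yet $A$ is in fact singular. Since $A$ is singular, its row space is a proper subspace of $\F^n$, so there exists a non-zero row vector $y\in\F^{1\times n}$ with $yA=0$. Note that once $A$ is fixed by the commitment, Peggy herself can compute such a $y$ (e.g.\ by an elimination on $A^T$). Left-multiplying the equation $Aw=b$ by $y$ then yields $yb = y(Aw)=0$, so any consistent right-hand side must lie in the hyperplane $\{v\in\F^n : yv=0\}$.

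The second step is to argue that this linear relation pins down (at least) one coordinate of $b$. Pick any index $i$ with $y_i\neq 0$; the equation $yb=0$ rewrites as $b_i = -y_i^{-1}\sum_{j\neq i}y_j b_j$, expressing $b_i$ as an explicit $\F$-linear function of the other entries. When the entries of $b$ are drawn from the bit-stream produced by the Fiat-Shamir generator, knowing $A$ (and hence $y$) together with the bits defining $\{b_j\}_{j\neq i}$ lets Peggy predict $b_i$, and consequently predict each bit of the binary encoding of $b_i$. Hence a successful forgery on a singular $A$ forces at least one bit of the random right-hand side to be determined by $A$ and the remaining bits — which is the statement of the lemma.

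The main obstacle I anticipate is not in the algebra but in making precise what ``one bit of $b$'' means across the different fields of interest: over $\F_2$ the constraint genuinely fixes a single bit, while over $\Zp$ or an extension field it in fact fixes an entire coordinate, hence $\lceil\log_2 \card{\F}\rceil$ bits. Phrasing the lemma at the weaker ``at least one bit'' level sidesteps this by covering both regimes uniformly; the same argument shows that in the $\Zp$ setting used in Section~\ref{ssec:bbs}, a forgery predicts a full $\lceil\log_2 p\rceil$-bit block of the Blum--Blum--Shub output, which is what is needed to chain into the factoring reduction of \cite{Fischlin:1997:SSP}.
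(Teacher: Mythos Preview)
Your argument is correct and rests on the same underlying observation as the paper's --- a consistent right-hand side must lie in the image of $A$, hence satisfies a non-trivial linear constraint --- but you get there by a more elementary route. The paper factors $A = P\left[\begin{smallmatrix}0&0\\*&L\end{smallmatrix}\right]UQ$ with $P,Q$ permutations and $U$ unit upper triangular, and reads off that the first entry of $P^{-1}b$ must vanish, i.e.\ one specific coordinate of $b$ is forced to be exactly zero. You instead take any non-zero left kernel vector $y$ with $yA=0$ and deduce $yb=0$, which only pins down one coordinate as a linear combination of the others. What the paper's route buys, when it works, is the cleaner conclusion ``an entry of $b$ equals $0$'', so the bit interpretation is immediate; your route is more robust (it does not rely on that particular factorization, which as written with $P$ a permutation actually forces a zero row in $A$ and so does not cover all singular matrices), and your final paragraph correctly supplies the missing step of translating the relation $b_i=-y_i^{-1}\sum_{j\ne i}y_jb_j$ into predictability of the bits encoding $b_i$.
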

\begin{proof}
  A singular matrix $A\in\F^{n\times n}$ has rank at most $n-1$.
  Write it as
  $A=P\left[\begin{smallmatrix}0&0\\*&L\end{smallmatrix}\right]UQ$ where $P$ and $Q$
  are permutation matrices, $L\in\F^{(n-1)\times (n-1)}$ is lower
  triangular and $U$ is unit invertible upper triangular. Then if $Aw=b$, 
  it means that  
  $\left[\begin{smallmatrix}0&0\\*&L\end{smallmatrix}\right] z=P^{-1}b$ for
  $z=UQw$. Therefore, the first entry of $P^{-1}b$ must be zero.
\end{proof}

Therefore, we fix the RSA modulus $N$ and require as a certificate
that the consistency check is repeated $B_N$ times. 
When the protocol is repeated with Fiat-Shamir derandomization,
we use as successive random vectors $b$, the hash of

the input and the previous iteration.
If Peggy can find a matrix $A$ of dimension $n$ (polynomial in the size of $N$)
for which she can forge the $B_N$ repeated applications of the certificate, then
 
she can predict $B_N$ bits of the Blumb-Blum-Shub hashes in
$\bigO{B_N(\Omega+n^{1+o(1)})}$ operations. 
Thus Peggy can factor $N$ in polynomial time.

\section*{Acknowledgments} 
We thank Brice Boyer 
, Shafi Goldwasser 
, Cl\'ement Pernet 
, Jean-Louis Roch 
, Guy Rothblum 
, Justin Thaler  
and the referees  
for their helpful comments.

\bibliographystyle{acm}  
\bibliography{biblio}
\balancecolumns
\end{document}